\newtheorem{theorem}{Theorem}
\newtheorem{lemma}[theorem]{Lemma}
\newtheorem{proposition}{Proposition}
\newtheorem{claim}{Claim}
\def\0{\mathsf{0}}
\def\1{\mathsf{1}}
\def\E{\mathbb{E}}
\def\F{\mathbb{F}}
\def\P{\mathbb{P}}
\def\sfx{\mathsf{x}}
\def\cE{\mathcal{E}}
\def\cS{\mathcal{S}}
\begin{document}
\title{Trace Reconstruction of First-Order Reed-Muller Codewords Using Run Statistics}


\author{%
  \IEEEauthorblockN{Shiv Pratap Singh Rathore \, and \, Navin Kashyap}
  \IEEEauthorblockA{Department of Electrical Communication Engineering \\
                   Indian Institute of Science, Bengaluru \\
                    Email: \{shivprataps,nkashyap\}@iisc.ac.in}
}
\maketitle


\begin{abstract}
In this paper, we derive an expression for the expected number of runs in a trace of a binary sequence $x \in \{\0,\1\}^n$ obtained by passing $x$ through a deletion channel that independently deletes each bit with probability $q$. We use this expression to show that if $x$ is a codeword of a first-order Reed-Muller code, and the deletion probability $q$ is $1/2$, then $x$ can be reconstructed, with high probability, from $\widetilde{O}(n^2)$ many of its traces.
\end{abstract}

\section{Introduction}
In the trace reconstruction problem, a sequence $x \in \{\0,\1\}^n$ is to be reconstructed exactly from multiple independent traces obtained by repeatedly passing $x$ through an i.i.d.\ deletion channel that independently deletes each bit with probability $q$. The goal is to determine the trace complexity of reconstruction, i.e., the minimum number of traces required to reconstruct $x$ with high probability (w.h.p.) and also to find an efficient reconstruction algorithm. The problem has its origins in the work of Levenshtein \cite{Lev97,Lev01}, but in the form described above, it was first introduced by Batu et al.\ \cite{BKKM04}. 

The problem has two main variants: worst-case and average-case. In worst-case trace reconstruction, every $x \in \{\0,\1\}^n$ must be reconstructed w.h.p., while in the average-case version, the sequence $x$ is picked uniformly at random from $\{\0,\1\}^n$. For the worst-case problem, it was shown by Holenstein et al.\ \cite{HMPW08} that $\exp(\widetilde{O}(\sqrt{n}))$ traces\footnote{$\widetilde{O}$-notation suppresses additional polylog factors; thus, $\widetilde{O}(\sqrt{n})$ is $O(\sqrt{n} \, \log^c n$) for some constant $c > 0$.} suffice for reconstruction. This was improved to $\exp(O(n^{1/3}))$ by De et al.\ \cite{DOS17} and independently by Nazarov and Peres \cite{NP17}, and further to $\exp(\widetilde{O}(n^{1/5}))$ by Chase \cite{Chase20}. For the average-case problem, the current best upper bound on the trace complexity is $\exp(O(\log^{1/3} n))$, due to Holden et al. \cite{HPPZ19}. The current best trace complexity lower bounds are $\Omega(n^{3/2} / \log^7 n)$ for the worst-case problem and $\Omega((\log^{5/2}n)/(\log \log n)^{7})$ for the average-case version \cite{chase19}. Several other variants of the trace reconstruction problem have been considered in the literature; see e.g., \cite{CGMR20,KMMP21,NR21,BPRS21}. Notably, Krishnamurthy et al.\ \cite{KMMP21} considered trace reconstruction when the sequence $x$ has some additional property, which in their case was $k$-sparsity. In this paper, we consider trace reconstruction when the sequence $x$ belongs to the first-order Reed-Muller code $\mathrm{RM}(m,1)$.

Our approach is in the spirit of that of Chase \cite{Chase20}, which uses a statistic based on the appearances of a specific subsequence in the observed traces to distinguish between a pair of sequences $x$ and $y$. 
We use the simpler notion of runs (i.e., contiguous subsequences of like symbols) to craft our statistic. Specifically, we demonstrate that the number of runs of $\0$s and $\1$s in the observed traces can be used to differentiate between the codewords of $\mathrm{RM}(m,1)$. Doing so, we obtain an $\widetilde{O}(n^2)$ upper bound on the (worst-case) trace complexity of sequences from $\mathrm{RM}(m,1)$ for a deletion probability of $q=\frac12$. While runs of $\0$s and $\1$s have previously been used in reconstruction algorithms \cite{BKKM04}, \cite{CGMR20}, \cite{KMMP21}, they were invariably used as an aid in aligning the bits of a trace to the bits of the original sequence $x$. In contrast, our use of the number of runs in a trace is purely statistical. 

This paper has two main contributions. The first is Theorem~\ref{thm:runs}, presented in Section~\ref{sec:runs}, which gives an expression for the expected number of runs in a (random) trace of a given sequence $x \in \{\0,\1\}^n$. The other is Theorem~\ref{thm:RMm1} (Section~\ref{sec:trace_reconstruction}), which states that the trace complexity of reconstructing w.h.p.\ an arbitrary codeword from $\mathrm{RM}(m,1)$ is $\widetilde{O}(n^2)$ when the deletion probability $q$ is $\frac12$. 


\section{Notation and Preliminaries} \label{sec:prelims}
We use $[n]$ to denote the set $\{1,2,\ldots,n\}$. Binary sequences, i.e., sequences over the alphabet $\{\0,\1\}$, will be denoted by $x,y,z,\omega$ etc., with $x_i,y_i,z_i,\omega_i$ etc.\ denoting the $i$-th bits in the respective sequences. If $x = (x_1,\ldots,x_n)$, then $n$ is called the length of the sequence $x$. The concatenation $x \| y$ of two sequences $x = (x_1,\ldots,x_m)$ and $y = (y_1,\ldots,y_n)$ is the sequence $(x_1,\ldots,x_m,y_1,\ldots,y_n)$. The complement of a sequence $x$ is the sequence $\overline{x}$ having $i$-th bit $\overline{x_i}$, where $\overline{x_i}$ is the complement of the bit $x_i$, defined via $\overline{\mathsf{0}} = \mathsf{1}$ and $\overline{\mathsf{1}} = \mathsf{0}$. We will use the notation $\widehat{x}=x \| x$ and $\widecheck{x}=x \| \overline{x}$. 

Let $x$ be a binary sequence of length $n$. The \emph{support} of $x$ is the set $\cS_x = \{i:x_i=\mathsf{1}\}$. We will use $\overline{\cS}_x$ to denote the complementary set $[n]\setminus \cS_x = \{i:x_i=\mathsf{0}\}$. The \emph{Hamming weight} of $x$ is $w_H(x) = |\cS_x|$. A \emph{run} of $\mathsf{0}$s of length $l$ in $x$ is a subsequence of the form $(x_{i},x_{i+1},\cdots,x_{i+l-1})$, $1 \le i \le n-l+1$, with $x_{i-1}=\mathsf{1},x_{i}=x_{i+1}=\cdots=x_{i+l-1}=\mathsf{0},x_{i+l}=\mathsf{1}$, where to handle edge cases, we set $x_0=x_{n+1}=\mathsf{1}$. A run of $\mathsf{1}$s is defined analogously. 

Given a binary sequence $x$, a \emph{trace} of $x$ is a (random) subsequence, $T_x$, obtained by passing $x$ though an i.i.d.\ deletion channel that independently deletes each bit in $x$ with probability $q \in (0,1)$. Let $R_{x,\mathsf{0}}$ and $R_{x,\mathsf{1}}$ be the (random) number of runs of $\mathsf{0}$s and $\mathsf{1}$s, respectively, in $T_x$. The total number of runs in $T_x$ is denoted by $R_x = R_{x,\mathsf{0}} + R_{x,\mathsf{1}}$.  Multiple traces of $x$ are denoted by $T_x^1,T_x^2,\ldots$, and we use $R_{x,\mathsf{0}}^{i}$, $R_{x,\mathsf{1}}^{i}$ and $R_{x}^{i}$ to denote the number of runs of $\mathsf{0}$s, number of runs of $\mathsf{1}$s, and total number of runs, respectively, in $T_x^i$. 

\subsection{Binary First-Order Reed-Muller Codes} \label{sec:RMm1}
In this section, we record some standard facts about binary first-order Reed-Muller (RM) codes that will be used later in the paper; for more information on RM codes, see, e.g., \cite{ASY21}. Let $\F_2$ denote the binary field. For a fixed integer $m \ge 1$, let $\sfx_1,\sfx_2,\ldots,\sfx_m$ be $m$ Boolean variables. For a polynomial $f \in \F_2[\sfx_1,\sfx_2,\ldots,\sfx_m]$, and a binary $m$-tuple $z = (z_1,z_2,\ldots,z_m) \in \F_2^m$, the evaluation of $f$ at $z$ is $\text{Eval}_{z}(f)=f(z_{1},z_{2},\cdots,z_{m})$, all arithmetic taking place in $\F_2$. Then, we define the evaluation vector $\text{Eval}(f) := (\text{Eval}_{z}(f):z \in \mathbb{F}_{2}^{m})$, with the evaluation points $z \in \F_2^m$ appearing in the standard lexicographic order. The binary first-order RM code $\mathrm{RM}(m,1)$ is the set of all evaluation vectors $\text{Eval}(f)$ of polynomials $f$ of the form $u_0 + u_1\sfx_1 + u_2 \sfx_2 + \cdots + u_m \sfx_m$, with $u_i \in \F_2$ for all $i$. The code has blocklength $n=2^m$, dimension $m+1$, and minimum distance $n/2=2^{m-1}$. In fact, the code consists of the all-zeros and all-ones vectors, $n-1$ vectors of Hamming weight $n/2$ with first coordinate $\0$, and $n-1$ vectors of Hamming weight $n/2$ with first coordinate $\1$. In particular, since the all-ones vector is a codeword, $x \in \mathrm{RM}(m,1)$ implies that $\overline{x} \in \mathrm{RM}(m,1)$.

The code $\mathrm{RM}(m,1)$ can be constructed recursively from $\mathrm{RM}(m-1,1)$. Recalling our notation $\widehat{x}=x \| x$ and $\widecheck{x}=x \| \overline{x}$ from the previous subsection, we have $\mathrm{RM}(m,1) = \{\widehat{x}: x \in \mathrm{RM}(m-1,1)\} \cup \{\widecheck{x}: x \in \mathrm{RM}(m-1,1)\}$. 
The recursion is initialized by $\mathrm{RM}(1,1) = \F_2^2 = \{\0\0,\0\1,\1\0,\1\1\}$. 

From the recursive construction, it follows that the first four coordinates of every codeword in $\mathrm{RM}(m,1)$, $m \ge 2$, forms a codeword in $\mathrm{RM}(2,1)$, and hence, is of the form $x \| x$ or $x \| \bar{x}$ for some $x \in \{\0\0,\0\1,\1\0,\1\1\}$.


\subsection{Chernoff Bounds}\label{sec:chernoff}
In this paper, we will make much use of the following standard version of the Chernoff bounds:
Let $X=\sum\limits_{i=1}^{n}X_{i}$ where $X_{i}$ is a Bernoulli random variable with $\Pr(X_{i}=\mathsf{1})=p_{i}$. Let $\mu=\mathbb{E}[X]=\sum_{i=1}^{n}p_{i}$. We then have 
\begin{itemize}
    \item for $\delta>0$, \ $\Pr(X \geq (1+\delta)\mu) \leq e^{-\mu \delta^{2}/(2+\delta)}$; in particular, for $0<\delta<1$, $\Pr(X \geq (1+\delta)\mu) \leq e^{-\mu \delta^{2}/3}$
    \item for $0<\delta<1, \ \Pr(X \leq (1-\delta)\mu) \leq e^{-\mu \delta^{2}/2}$.
\end{itemize}

\subsection{Hoeffding's Inequality}
Let $X_1,\cdots,X_n$ be independent random variables such that $a_i \leq X_i \leq b_i$ a.s. If $X=\sum\limits_{i=1}^{n} X_i$, then Hoeffding's inequality states that 
$$\Pr(|X-\E[X]| \geq t) \leq 2 \exp \biggl(-\frac{2t^2}{\sum\limits_{i=1}^{n}(b_i-a_i)^2} \biggr)$$

\section{Expected Number of Runs in a Trace} \label{sec:runs}
One of the main contributions of this paper is an explicit expression for the expected number of runs of $\mathsf{0}$s and $\mathsf{1}$s in a trace of a given binary sequence $x$ of length $n$. 
\begin{theorem} \label{thm:runs}
    Given a binary sequence $x$, the expected numbers of runs of $\mathsf{0}$s and $\mathsf{1}$s in a trace $T_x$ obtained by independently deleting each bit in $x$ with probability $q$ are given by 
    $$\mathbb{E}[R_{x,\mathsf{0}}]=(1-q) \cdot \left(n-w_{H}(x)-\frac{1-q}{q} \sum\limits_{i,j \in \overline{\cS}_x: i < j} q^{j-i} \right)
    $$ and 
    $$\mathbb{E}[R_{x,\mathsf{1}}]=(1-q) \cdot \left(w_{H}(x)-\frac{1-q}{q} \sum\limits_{i,j \in {\cS}_x: i < j} q^{j-i} \right),$$ 
    respectively. Consequently, $\E[R_x] = \E[R_{x,\0}]+\E[R_{x,\1}]$ is given by
    $$
    (1-q) \cdot \left(n-\frac{1-q}{q} \left(\sum\limits_{i,j \in \overline{\cS}_x: i < j} q^{j-i} + \sum\limits_{i,j \in {\cS}_x: i < j} q^{j-i} \right) \right).
    $$
\end{theorem}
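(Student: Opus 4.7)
The plan is to write $R_{x,\0}$ as a sum of indicator random variables indexed by $\overline{\cS}_x$ and apply linearity of expectation. For each $i \in \overline{\cS}_x$, let $Y_i$ be the indicator that $x_i$ survives the deletion channel and begins a run of $\0$s in $T_x$; equivalently, $x_i$ is not deleted, and either all of $x_1,\ldots,x_{i-1}$ are deleted or the most recent surviving position $k<i$ has $x_k=\1$. Since each run of $\0$s in $T_x$ has a unique such starting index in $\overline{\cS}_x$, we have $R_{x,\0}=\sum_{i\in\overline{\cS}_x} Y_i$.

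To evaluate $\Pr(Y_i = 1)$, I would condition on the location (or absence) of the most recent surviving bit strictly before position $i$. By independence of the deletions, the survival of $x_i$ contributes the factor $1-q$; for each $k\in\{1,\ldots,i-1\}$, the event that $k$ is the most recent surviving position before $i$ contributes $(1-q)\,q^{i-1-k}$, and the event that all of $x_1,\ldots,x_{i-1}$ are deleted contributes $q^{i-1}$. A one-line geometric sum shows these $i$ probabilities partition the sample space, so complementing over the $\0$-positions yields
$$\Pr(Y_i = 1) \;=\; (1-q)\,\biggl(1 - (1-q)\sum_{k\in\overline{\cS}_x,\,k<i} q^{i-1-k}\biggr).$$
Summing this over $i\in\overline{\cS}_x$ and pulling out a factor of $q^{-1}$ from the inner exponent rewrites the resulting double sum as $\sum_{i,j\in\overline{\cS}_x,\,i<j} q^{j-i}$, which is exactly the stated expression for $\E[R_{x,\0}]$. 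The formula for $\E[R_{x,\1}]$ follows by the same argument applied to $\overline{x}$ (or by interchanging the roles of $\0$ and $\1$), and $\E[R_x]$ is obtained by adding the two via linearity.

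I do not anticipate a genuine obstacle: the proof is essentially bookkeeping once the right indicators are chosen. The only subtlety is correctly handling the boundary contribution $q^{i-1}$ corresponding to $x_i$ being the very first surviving bit of $T_x$. Writing $\Pr(Y_i=1)$ as a complement summed over $\overline{\cS}_x$, rather than directly as a sum over $\cS_x$ with an extra boundary term, absorbs this case cleanly and leads without further calculation to the advertised closed form.
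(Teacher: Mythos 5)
Your proposal is correct and follows essentially the same route as the paper's proof: the same indicator decomposition over $\overline{\cS}_x$ (your $Y_i$ is the paper's $I_j$), the same conditioning on the most recent surviving position before $i$, and the same complementation argument giving $\Pr(Y_i=1)=(1-q)\bigl(1-(1-q)\sum_{k\in\overline{\cS}_x,\,k<i}q^{i-1-k}\bigr)$ before summing. No gaps.
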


\begin{proof}
When $x$ is passed through an i.i.d.\ deletion channel with deletion probability $q$, the action of the channel is captured by a (random) binary sequence $\omega$ of length $n$, in which $\omega_j = \1$ iff the channel deletes $x_j$, and $\Pr(\omega) = q^{w_H(\omega)} (1-q)^{n-w_H(\omega)}$. The resulting trace is $T_x(\omega) = {(x_j)}_{j \in \overline{\cS}_\omega}$.

We only prove the result for $R_{x,\0}$, since the result for $R_{x,\1}$ is obtained entirely analogously by exchanging the roles of $\0$s and $\1$s.
So, let us consider the runs of $\mathsf{0}$s in $T_x(\omega)$. Observe that the $j$-th bit in the sequence $x$ will start a new run of $\mathsf{0}$s in $T_x(\omega)$ iff $x_j = 0$, $\omega_j = \0$ (so that $x_j$ is not deleted), and the \emph{previous} undeleted bit in $x$ (if it exists) is a $\1$.  We then define $n$ indicator random variables $I_1,I_2,\ldots,I_n$ such that $I_j=1$ iff the $j$-th bit in $x$ starts a new run of $\mathsf{0}$s in the random trace $T_x$. With this, we have $R_{x,\mathsf{0}}=\sum\limits_{j=1}^{n} I_j$. In fact, $R_{x,\mathsf{0}}=\sum_{j \in \overline{S}_x} I_j$, since $I_j=0$ if $x_j = \1$. Thus, $\E[R_{x,\0}] = \sum_{j \in \overline{S}_x} \Pr(I_j=1)$. We next determine $\Pr(I_j=1)$ for $j \in \overline{\cS}_x$.

When $j \in \overline{\cS}_x$, we have 
\begin{align*}
\Pr(I_j = 1) &= \Pr(I_j = 1, \omega_j = 0) \\
&= (1-q) \cdot \Pr(I_j = 1 \mid \omega_j = 0) \\
&= \ (1-q) \cdot \bigl(1-\Pr(I_j=0 \mid \omega_j=0)\bigr).
\end{align*}
Now, given that $x_j = \0$ is undeleted, we can have $I_j=0$ iff the previous undeleted bit in $x$ is a $\0$. In other words, conditioned on $x_j = 0$ being undeleted, we have $I_j=0$ iff the event $\cE_i := \{\omega_i=\0, \text{ but } \omega_{i+1} = \cdots = \omega_{j-1} = \1\}$ occurs for some $i < j$ such that $x_i = \0$. Since the events $\cE_i$ are mutually disjoint and the $\omega_\ell$'s are i.i.d., we obtain that for $j \in \overline{\cS}_x$,
\begin{align*}
\Pr(I_j=0 \mid \omega_j=0) &= \sum_{i \in \overline{\cS}_x: i < j} \Pr(\cE_i) \\
&= \sum_{i \in \overline{\cS}_x: i < j} (1-q) \, q^{j-i-1}.
\end{align*}
Putting it all together, we obtain
\begin{align*}
    \E[R_{x,\0}] &= (1-q) \cdot \sum_{j \in \overline{S}_x} \bigl(1-\Pr(I_j=0 \mid \omega_j=0)\bigr) \\
     &= (1-q) \cdot \left(n-w_H(x) - \frac{1-q}{q} \sum_{i,j \in \overline{\cS}_x: i < j} q^{j-i} \right)
\end{align*}
as desired. 
\end{proof}

For the special case when $q=\frac{1}{2}$, we have
\begin{equation} 
     \mathbb{E}[R_x] = \frac{1}{2} \left(n - \sum_{i,j \in \overline{\cS}_x: i < j} {\biggl(\frac{1}{2}\biggr)}^{j-i} - \sum_{i,j \in {\cS}_x: i < j} {\biggl(\frac{1}{2}\biggr)}^{j-i}\right) \, . \label{eq:ERx}
\end{equation}



\section{Trace Reconstruction for $\mathrm{RM}(m,1)$} \label{sec:trace_reconstruction}
From here on, we consider only the case of deletion probability $q = \frac{1}{2}$. Our main result is that, using the average number of runs in a trace as a statistic, one can reconstruct, with high probability (w.h.p.), any $x \in \mathrm{RM}(m,1)$ using only $\widetilde{O}(n^2)$ of its traces, where $n = 2^m$ is the blocklength of the code. 
\begin{theorem}
    For deletion probability $q=\frac{1}{2}$, any codeword $x \in \mathrm{RM}(m,1)$ can be correctly reconstructed from $\widetilde{O}(n^2)$ traces, with probability $\geq 1-1/\mathrm{poly}(n)$.
    \label{thm:RMm1}
\end{theorem}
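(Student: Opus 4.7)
The plan is to turn the closed-form expressions for $\E[R_{x,\0}]$ and $\E[R_{x,\1}]$ from Theorem~\ref{thm:runs} into decoding statistics: estimate the pair $(\E[R_{x,\0}], \E[R_{x,\1}])$ by the empirical averages $(\widehat R_\0, \widehat R_\1)$ with $\widehat R_b = \frac{1}{N}\sum_{i=1}^N R_{x,b}^i$, and identify $x$ by matching against the closed-form pairs of every codeword in $\mathrm{RM}(m,1)$. Since $R_{x,b}^i \in [0,n]$, Hoeffding's inequality gives $\Pr(|\widehat R_b - \E[R_{x,b}]| \ge \epsilon) \le 2\exp(-2N\epsilon^2/n^2)$, so $N = \Theta(n^2 \log n / \epsilon^2)$ traces concentrate both empirical means within $\epsilon$ of their targets with probability at least $1 - 1/\mathrm{poly}(n)$.

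The crux is a separation bound: for any two codewords $x, y \in \mathrm{RM}(m,1)$ with $y \notin \{x, \overline{x}\}$, the pairs $(\E[R_{x,\0}], \E[R_{x,\1}])$ and $(\E[R_{y,\0}], \E[R_{y,\1}])$ should differ by some $\Delta = \Omega(1/\mathrm{polylog}(n))$ in $\ell_\infty$. Taking $\epsilon < \Delta/3$ and a union bound over the $O(n^2)$ pairs of codewords then yields the claimed $\widetilde O(n^2)$ trace bound. I would attack this separation via the recursion $\mathrm{RM}(m,1) = \{\widehat{x'}: x' \in \mathrm{RM}(m-1,1)\} \cup \{\widecheck{x'}: x' \in \mathrm{RM}(m-1,1)\}$ combined with the identity (which follows by viewing a trace of $u \| v$ as an independent concatenation $T_u \| T_v$): $\E[R_{u\|v}] = \E[R_u] + \E[R_v] - \Pr(\text{last bit of }T_u = \text{first bit of }T_v)$, up to an error of $O((1/2)^{n/2})$ coming from the event that a half is fully deleted. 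Writing $\alpha_{x'} = \Pr(\text{first bit of }T_{x'} = \0)$ and $\beta_{x'} = \Pr(\text{last bit of }T_{x'} = \0)$, the identity yields the clean formula $\E[R_{\widecheck{x'}}] - \E[R_{\widehat{x'}}] = (2\alpha_{x'}-1)(2\beta_{x'}-1)$, and the bias factors are kept bounded away from zero by the constraint that the first four coordinates of any $\mathrm{RM}(m,1)$ codeword form a codeword of $\mathrm{RM}(2,1)$; analogous identities for $R_{\cdot,\0}$ and $R_{\cdot,\1}$ then drive an induction on $m$ that propagates the separation through the two component-wise run statistics.

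The pair statistic alone cannot separate $x$ from $\overline{x}$ since $\E[R_{\overline{x},b}] = \E[R_{x,\overline{b}}]$, but it leaves at worst a 2-element candidate set $\{x, \overline{x}\}$. To break this tie I would add an auxiliary empirical statistic $\widehat p$, the fraction of traces whose first bit is $\0$, whose expected value is $\frac{1}{2}\sum_{i \in \overline{\cS}_x}(1/2)^{i-1}$; a case check over the three non-trivial $\mathrm{RM}(2,1)$ prefixes $\0\0\1\1$, $\0\1\0\1$, $\0\1\1\0$ (and their complements) shows this mean differs by $\Omega(1)$ between $x$ and $\overline{x}$, so $O(\log n)$ additional traces suffice by Hoeffding. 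The principal obstacle is the separation bound above: ensuring that the correction term $\Pr(\text{last bit of }T_u = \text{first bit of }T_v)$ introduced at each recursion level does not cancel the gap inherited from $\mathrm{RM}(m-1,1)$, so that a polylogarithmic separation propagates to all $m$; once that is established, the rest of the argument reduces to routine applications of Hoeffding's inequality and a union bound.
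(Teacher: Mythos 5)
Your overall architecture coincides with the paper's: estimate expected run counts empirically, invoke Hoeffding's inequality with $R\in[0,n]$ to get $\widetilde{O}(n^2)$ traces per constant of separation, resolve the $x$ versus $\overline{x}$ ambiguity by a first-bit statistic using the $\mathrm{RM}(2,1)$ prefix structure, and prove a separation bound by induction along the doubling recursion $\mathrm{RM}(m,1)=\{\widehat{x}:x\in\mathrm{RM}(m-1,1)\}\cup\{\widecheck{x}:x\in\mathrm{RM}(m-1,1)\}$. Your concatenation identity $\E[R_{u\|v}]=\E[R_u]+\E[R_v]-\Pr(\text{last bit of }T_u=\text{first bit of }T_v)$ is correct up to the exponentially small empty-half events and is exactly the probabilistic content of the paper's Lemma~\ref{lemma_update}; your product formula $(2\alpha-1)(2\beta-1)$ for $\E[R_{\widecheck{x}}]-\E[R_{\widehat{x}}]$ checks out and, combined with the bounded first-bit and last-bit biases coming from the $\mathrm{RM}(2,1)$ prefix/suffix structure, does dispose of the comparison between $\widehat{x}$ and $\widecheck{x}$ for the \emph{same} parent $x$.

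The genuine gap is that this is only one of four cases, and the remaining three (comparing $\widehat{x}$ with $\widehat{y}$, $\widecheck{x}$ with $\widecheck{y}$, and $\widehat{x}$ with $\widecheck{y}$ for distinct parents $x\neq y$) are where the difficulty you yourself flag as ``the principal obstacle'' actually lives; you give no mechanism for overcoming it. There the inherited gap is $2\,|\E[R_x]-\E[R_y]|$, a small constant, while the difference of correction terms has the form $f(\alpha_x,\beta_x)-f(\alpha_y,\beta_y)$ with $f(a,b)=ab+(1-a)(1-b)$, an a priori $O(1)$ quantity that is \emph{not} controlled by the run-count gap and could in principle cancel it outright. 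Closing the induction therefore requires carrying auxiliary invariants that bound the boundary-bias differences by the run-count differences at every level --- this is precisely the paper's Lemma~\ref{lem:conditions}, whose conditions (C2)--(C4) have no analogue in your sketch --- together with a nontrivial base case (the paper verifies $m=4$ by explicit computation of all sixteen coefficient triples in Table~\ref{table:RM41}). Asserting that the identities ``drive an induction'' without specifying these invariants or a base case leaves the central claim (the paper's Proposition~\ref{prop:RMm1}) unproved; as you note, everything else in your proposal is routine once that separation is in hand, which is exactly why that separation is the substance of the theorem.
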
  

We first describe the proof strategy. Recall that $\mathrm{RM}(m,1)$ consists of $2n = 2^{m+1}$ codewords, of which half begin with a $\0$ and the other half begin with a $\1$. By examining the first bit of $O(\log n)$ traces, the first bit of $x$ can be correctly determined w.h.p. --- see Claim~\ref{claim1} below. It then remains to distinguish between the $n$ codewords of having $x_1$ as the first bit. This, as we will show, can be achieved w.h.p.\ using run statistics obtained from $O(n^2 \log n)$ traces. What makes this possible is the following property of the $\mathrm{RM}(m,1)$ code, which may be of independent interest.

\begin{proposition}
 If $x$ and $y$ are two distinct codewords in $\mathrm{RM}(m,1)$, $m \ge 4$, that agree in their first bit (i.e., $x_1 = y_1$), then $\bigl|\mathbb{E}[R_{x}]-\mathbb{E}[R_{y}]\bigr| \ge 0.028$.
 \label{prop:RMm1}
\end{proposition}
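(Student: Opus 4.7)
The plan is to reduce the bound to a statement about a quadratic functional of the sign pattern $\psi_i := (-1)^{x_i}$ and then to exploit the recursive structure $\mathrm{RM}(m,1) = \{\widehat{x'}, \widecheck{x'} : x' \in \mathrm{RM}(m-1,1)\}$. Using the identity $[x_i = x_j] = (1+\psi_i\psi_j)/2$, equation~\eqref{eq:ERx} rewrites as
\[
\mathbb{E}[R_x] \;=\; \frac{n}{2} - \frac{C_n + T(x)}{4}, \qquad T(x) := \sum_{i<j} 2^{i-j}\,\psi_i\psi_j,
\]
where $C_n := \sum_{k=1}^{n-1}(n-k)\,2^{-k}$ is independent of $x$. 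Hence $|\mathbb{E}[R_x] - \mathbb{E}[R_y]| = |T(x) - T(y)|/4$, and it suffices to prove $|T(x) - T(y)| \ge 0.112$.

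To exploit the recursion, I would introduce the auxiliary quantity $\mu(x) := 2^{-n}\,A(x)\,B(x)$, with $A(x) := \sum_i \psi_i\,2^{-i}$ and $B(x) := \sum_i \psi_i\,2^i$. Splitting the double sum defining $T$ across the two halves of $\widehat{x'}$ or $\widecheck{x'}$ yields
\[
T(\widehat{x'}) = 2T(x') + \mu(x'), \qquad T(\widecheck{x'}) = 2T(x') - \mu(x'),
\]
along with multiplicative recursions $\mu(\widehat{x'}) = \tilde\gamma_k(H)\,\mu(x')$ and $\mu(\widecheck{x'}) = -\tilde\gamma_k(C)\,\mu(x')$ whenever $|x'| = 2^{k-1}$, where $\tilde\gamma_k(H) := 1 + 2^{1-2^{k-1}} + 2^{-2^k}$ and $\tilde\gamma_k(C) := 1 - 2^{1-2^{k-1}} + 2^{-2^k}$. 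Parameterizing the codewords of $\mathrm{RM}(m,1)$ that start with $\mathsf{0}$ by sequences $s \in \{H,C\}^m$ (so that $x_s$ is built by $m$ iterations of $\widehat{\cdot}$ or $\widecheck{\cdot}$, prescribed by $s$, starting from the length-$1$ codeword $\mathsf{0}$), one obtains $|\mu(x_s)| = \tfrac12\prod_{k=1}^{|s|}\tilde\gamma_k(s_k)$. Since $\tilde\gamma_k(C) < \tilde\gamma_k(H)$ for every $k \ge 1$, the minimum of $|\mu(x_s)|$ over $|s| = m-1$ is attained at $s = C^{m-1}$ and equals $\tfrac12\prod_{k=1}^{m-1}\tilde\gamma_k(C)$.

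The easy case is when $x,y$ differ only in the final operation (i.e., $x = \widehat{x'}$ and $y = \widecheck{x'}$ for a common $x' \in \mathrm{RM}(m-1,1)$), in which case the recursions immediately give $|T(x) - T(y)| = 2|\mu(x')| \ge \prod_{k=1}^{m-1}\tilde\gamma_k(C)$. For $m \ge 4$, the first three factors already contribute $\tilde\gamma_1(C)\tilde\gamma_2(C)\tilde\gamma_3(C) = (1/4)(9/16)(225/256) = 2025/16384 > 0.123$, while each remaining factor $\tilde\gamma_k(C)$ with $k \ge 4$ exceeds $1 - 2^{1-2^{k-1}}$, and the telescoping bound $\prod_{k\ge 4}(1 - 2^{1-2^{k-1}}) > 0.99$ keeps the full product above $0.112$ for every $m \ge 4$, as required.

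The main obstacle is the remaining case $x' \ne y'$. Here, using the unrolled formula $T(x_s) = \tfrac12\sum_{k=1}^m 2^{m-k}\,\epsilon_{s_k}\prod_{j<k}\delta_j(s_j)$, with $\epsilon_H=+1$, $\epsilon_C=-1$, and $\delta_j(\sigma) := \epsilon_\sigma\,\tilde\gamma_j(\sigma)$, the difference factors as $T(x_s) - T(x_{s'}) = \Pi_0 \cdot \Lambda$, where $\Pi_0 := \prod_{j<j_0}\delta_j(s_j)$ is the product over the common prefix and $\Lambda = \pm 2^{m-j_0} + R$, with $j_0$ the first position at which $s,s'$ disagree. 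The correction term $R$ splits further as $\tilde\gamma_{j_0}(H)\,\widetilde T(s_{>j_0}) + \tilde\gamma_{j_0}(C)\,\widetilde T(s'_{>j_0})$, where $\widetilde T$ is a ``shifted-level'' analog of $T$. A crude triangle-inequality bound on $R$ is too weak, but I would close the argument by induction on $m - j_0$, verifying the small base cases ($m - j_0 \le 3$) by direct enumeration and exploiting the fact that the multipliers $\tilde\gamma_j$ are so close to $1$ for $j \ge 4$ that an elementary estimate suffices for the larger cases.
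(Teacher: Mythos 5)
Your reformulation is correct and genuinely different from the paper's route, and in places cleaner. Writing $\psi_i=(-1)^{x_i}$ does give $\E[R_x]=\tfrac{n}{2}-\tfrac{1}{4}(C_n+T(x))$, so the target reduces to $|T(x)-T(y)|\ge 0.112$; the recursions $T(\widehat{x'})=2T(x')+\mu(x')$ and $T(\widecheck{x'})=2T(x')-\mu(x')$ check out; and the factorization $\mu(x)=2^{-n}A(x)B(x)$ is a nice observation, since it turns the paper's approximate control of $\beta_x-\gamma_x$ (your $\mu(x)$ is exactly the paper's $\beta_x-\gamma_x+(\tfrac12)^n(\alpha_x-\delta_x)+n(\tfrac12)^n=\alpha_{\widehat{x}}-\alpha_{\widecheck{x}}$, i.e.\ condition (C4) and Lemma~\ref{lemma_update}) into an exact multiplicative recursion. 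Consequently your ``easy case'' $x=\widehat{x'},\,y=\widecheck{x'}$ is complete: $|T(x)-T(y)|=2|\mu(x')|\ge\prod_{k=1}^{m-1}\tilde\gamma_k(C)>0.123\times 0.99>0.112$.

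The gap is the case $x'\ne y'$, which is where essentially all of the difficulty of the proposition lives, and there you have a plan rather than a proof. You rightly note that the triangle inequality on $R$ is too weak, but the situation is worse than ``crude bounds lose a constant'': even in the idealized setting $\tilde\gamma_j\equiv 1$, choosing both suffixes so that every term of $U:=\sum_{k>j_0}2^{m-k}\epsilon_{s_k}\prod_{j_0<j<k}\delta_j(s_j)$ and of the analogous $V$ opposes the leading term gives $U=V=-(2^{m-j_0}-1)$ and hence $\Lambda=2^{m-j_0}+\tfrac12U+\tfrac12V=1$, so $|\Pi_0|\cdot|\Lambda|\approx 0.1226$ --- only about $10\%$ above the needed $0.112$. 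Your deferred induction therefore has to control the perturbations coming from $\tilde\gamma_j\ne 1$ to within that margin, uniformly in $j_0$ and $m$. As stated it cannot be run: the multipliers $\tilde\gamma_{j_0+1},\tilde\gamma_{j_0+2},\dots$ appearing in $\widetilde T$ depend on the absolute level $j$, not on $m-j_0$, so ``base cases $m-j_0\le 3$ by direct enumeration'' is an infinite family of verifications indexed by $j_0$, and no induction hypothesis for the shifted functional $\widetilde T$ (with explicit constants) is formulated. This missing bookkeeping is exactly what the paper supplies through the auxiliary invariants (C2)--(C4) of Lemma~\ref{lem:conditions}, the update rules of Lemma~\ref{lemma_update}, the base-case Table~\ref{table:RM41}, and the case analysis of Appendix~\ref{app:cases2-4}; until you state and close your induction at that level of precision, the proposition is not established.
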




The proof of Proposition~\ref{prop:RMm1} is by induction on $m$, starting with the base case of $m=4$. We defer the proof to Section~\ref{sec:prop1_proof}. We first give a proof of Theorem~\ref{thm:RMm1}, assuming the validity of Proposition~\ref{prop:RMm1}.

\subsection{Proof of Theorem~\ref{thm:RMm1}} \label{sec:thm2_proof}

The algorithm used for reconstructing $x \in \mathrm{RM}(m,1)$ from $k$ non-empty traces $T_x^1, T_x^2, \ldots, T_x^k$ is described as follows:
\begin{description}
%
\item[Step 1:] Consider the first $\ell < k$ traces. If more than $\ell/2$ of these traces start with a $\1$, set $\mathsf{b}=\1$; else, set $\mathsf{b}=\0$.

\item[Step 2:] Compute $\overline{R} := \frac{1}{k} \sum_{i=1}^k (\# \text{ runs in trace } T_x^i)$. Find a $z \in \mathrm{RM}(m,1)$ with first bit $z_1 = \mathsf{b}$ (as obtained in Step~1) that minimizes $|\mathbb{E}[R_{z}]-\overline{R}|$ among all such $z$. Output $z$ and \textsc{stop}.
\end{description}
We will prove that if the algorithm takes $k = O(n^2\log n)$ random traces as input, and we set $\ell = O(\log n)$ in Step~1, then with probability at least $1-1/\mathrm{poly}(n)$, the codeword output by the algorithm is equal to the $x$ from which the $k$ traces were obtained. 
In what follows, $\P_x$ denotes the probability law of the traces $T_x^1,T_x^2,\ldots,T_x^k$, and $Z$ is the random variable representing the output of the reconstruction algorithm. \textsc{Reconstruction Error} is the event $Z \ne x$, and $\cE_j$, for $j = 1,2$,  is the event ``Error occurs in Step~$j$''. Note that 
\begin{equation}
\P_x(\textsc{Reconstruction Error}) \le \P_x(\cE_1) + \P_x(\cE_2 \cap \cE_1^c) \label{error_bnd}
\end{equation}
The terms on the right-hand side are bounded in Claims~\ref{claim1} and \ref{claim2} below.

\begin{claim}
    If $\ell = O(\log n)$ traces are used in Step~1, then for any $x \in \mathrm{RM}(m,1)$, we have $\P_x(\cE_1) \le n^{-c_1}$ for some constant $c_1 > 0$.
    \label{claim1}
\end{claim}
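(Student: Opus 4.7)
The plan is to show that the first bit of each non-empty trace of $x$ equals $x_1$ with some probability $p$ bounded away from $1/2$ by an absolute positive constant, and then to apply a Chernoff bound to the majority vote. Granted such a $p$, Step~1 is a majority vote over $\ell$ i.i.d.\ Bernoulli$(p)$ trials, and the Chernoff bound from Section~\ref{sec:chernoff} yields a failure probability of $\exp(-\Omega(\ell))$. Taking $\ell = \Theta(\log n)$ with a sufficiently large hidden constant then makes $\Pr(\cE_1) \le n^{-c_1}$ for any prescribed $c_1 > 0$.

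The key bias estimate can be obtained without any induction, using only the first four bits of $x$. The first non-deleted bit of $x$ sits at position $j$ with probability $(1/2)^j$, since bits $1,\ldots,j-1$ must all be deleted and bit $j$ must survive. Hence the probability that the first bit of the (unconditional) trace equals $x_1$ is
\begin{equation*}
 \sum_{j:\, x_j = x_1} (1/2)^j,
\end{equation*}
and conditioning on non-emptiness only scales this up by the factor $1/(1-(1/2)^n) \ge 1$. I would then invoke the fact, recalled in Section~\ref{sec:RMm1}, that for $m \ge 2$ the prefix $(x_1,x_2,x_3,x_4)$ of any $\mathrm{RM}(m,1)$ codeword is itself a codeword of $\mathrm{RM}(2,1)$. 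Using the complement symmetry $\overline{x} \in \mathrm{RM}(m,1)$, it suffices to handle $x_1 = \0$, in which case the only admissible prefixes are $\0\0\0\0$, $\0\0\1\1$, $\0\1\0\1$, and $\0\1\1\0$. Computing $\sum_{j \in \{1,2,3,4\}:\, x_j = \0} (1/2)^j$ in each of the four cases gives a minimum of $1/2 + 1/16 = 9/16$, attained by $\0\1\1\0$. Since the remaining terms with $j \ge 5$ are nonnegative, this yields $p \ge 9/16$.

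With $p \ge 9/16$ in hand, the indicator variables $Y_i$ of the events ``first bit of $T_x^i$ equals $x_1$'' are i.i.d.\ Bernoulli with mean at least $9/16$, so Step~1 errs only when $\sum_{i=1}^{\ell} Y_i \le \ell/2$. Since $\ell/2 = (1 - 1/9)\cdot (9\ell/16)$, the Chernoff lower-tail inequality gives $\Pr(\cE_1) \le \exp(-\ell/288)$, and $\ell = 288\, c_1 \ln n$ suffices. The only nonroutine part is the four-case enumeration on $\mathrm{RM}(2,1)$ prefixes; once that is done, the Chernoff application is completely standard and I do not foresee any other obstacle.
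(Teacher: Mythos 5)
Your proposal is correct and follows essentially the same route as the paper: both reduce to the $\mathrm{RM}(2,1)$ structure of the first four bits, obtain the same worst-case bias of $9/16$ for the first trace bit matching $x_1$ (attained by the prefix $\0\1\1\0$ and its complement), and apply the identical Chernoff lower-tail bound with $\delta = 1/9$ to get $e^{-\ell/288}$. Your enumeration of the four admissible prefixes via the exact formula $\sum_{j:\,x_j=x_1}(1/2)^j$ is a slightly more systematic packaging of the paper's two-case analysis on the second bit, and your explicit handling of the conditioning on non-emptiness is a small point the paper leaves implicit.
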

\begin{proof} 
The number of traces among $T_x^1,T_x^2,\ldots,T_x^{\ell}$ having first bit $\1$ is  $A=\sum_{i=1}^{\ell}A_{i}$ where 
$$
A_{i}=\begin{cases}
    1 & \text{if the first bit in $T_x^i$ is } \1, \\
    0 & \text{otherwise.}
\end{cases} 
$$
Suppose that the first bit of $x \in \mathrm{RM}(m,1)$ is $\1$. If the second bit is also $\1$, then $\mathbb{E}[A_{i}] \geq 1-\P_x($the first two bits of $x$ are deleted$)=3/4$. If the second bit of $x$ is $\0$, then the next two bits are either $\1\0$ or $\0\1$ (see Section~\ref{sec:RMm1}). In this case, $\mathbb{E}[A_{i}] \geq \P_x($one of the two $\1$s in the first four bits of $x$ is the first bit of trace $T_x^i) \geq (1/2+1/16) = 9/16$. Thus, when the first bit of $x$ is $\mathsf{1}$, then $\mathbb{E}[A]\geq \frac{9}{16}\ell$. Via the Chernoff bound, we obtain $\P_x(\cE_1) = \P_x(A \leq \frac{1}{2}\ell) \leq e^{-\frac{\ell}{288}}$. 

Next, suppose that the first bit of $x \in \mathrm{RM}(m,1)$ is $\0$. Note that the number of traces having first bit $\0$ is $\ell - A$. So, exchanging the roles of $\0$ and $\1$ in the above argument, we obtain $\P_x(\ell-A \le \frac{1}{2} \ell) \le e^{-\frac{\ell}{288}}$. Then, $\P_x(\cE_1) = \P_x(A > \frac12\ell) \le \P_x(A \ge \frac{1}{2} \ell) \le e^{-\frac{\ell}{288}}$.

Thus, for any $x \in \mathrm{RM}(m,1)$, we have $\P_x(\cE_1) \le e^{-\frac{\ell}{288}}$. Hence, if $\ell = c_0 \log n$, then $\P_x(\cE_1) \le n^{-c_0/288}$. 
\end{proof}

\begin{claim}
    If $k = O(n^2 \log n)$ traces are given as input to the reconstruction algorithm, then for any $x \in \mathrm{RM}(m,1)$, we have $\P_x(\cE_2 \cap \cE_1^c) \le 2 \cdot n^{-c_2}$ for some constant $c_2 > 0$.
    \label{claim2}
\end{claim}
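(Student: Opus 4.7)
The plan is to reduce $\cE_2 \cap \cE_1^c$ to a concentration event for the empirical average number of runs $\overline{R}$ around its true mean $\E[R_x]$, and then bound that event via Hoeffding's inequality. Proposition~\ref{prop:RMm1} supplies the constant gap in expected runs that makes this reduction possible.

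First I would observe that on the event $\cE_1^c$ (so that $\mathsf{b} = x_1$), Step~2 searches over codewords $z \in \mathrm{RM}(m,1)$ with $z_1 = x_1$. For $m \ge 4$, Proposition~\ref{prop:RMm1} tells us that every competitor $y \ne x$ with $y_1 = x_1$ satisfies $|\E[R_x] - \E[R_y]| \ge 0.028$. (The cases $m \le 3$ can be absorbed as base cases, since then $n = 2^m$ is constant and $\mathrm{RM}(m,1)$ has only $O(1)$ codewords, so the claim is trivial for suitable constants.) Consequently, whenever $|\overline{R} - \E[R_x]| < 0.014$, the triangle inequality yields $|\overline{R} - \E[R_y]| \ge 0.028 - 0.014 = 0.014 > |\overline{R} - \E[R_x]|$ for every such $y$, so Step~2 returns $x$. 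This gives the containment $\cE_2 \cap \cE_1^c \subseteq \{|\overline{R} - \E[R_x]| \ge 0.014\}$, and hence
$$\P_x(\cE_2 \cap \cE_1^c) \le \P_x\bigl(|\overline{R} - \E[R_x]| \ge 0.014\bigr).$$

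Next I would apply Hoeffding's inequality to the right-hand side. The counts $R_x^1,\ldots,R_x^k$ are i.i.d.\ copies of $R_x$, each taking values in $[0,n]$ (a trace has length at most $n$, and hence at most $n$ runs). Setting $X = \sum_{i=1}^k R_x^i$, Hoeffding yields
$$\P_x\bigl(|\overline{R} - \E[R_x]| \ge t\bigr) = \P_x\bigl(|X - \E[X]| \ge kt\bigr) \le 2 \exp\!\left(-\frac{2 k t^2}{n^2}\right).$$
Plugging in $t = 0.014$ and $k = c' n^2 \log n$ for a sufficiently large constant $c' > 0$ makes the right-hand side at most $2 n^{-c_2}$ for some $c_2 > 0$, which is the desired bound. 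Combined with Claim~\ref{claim1} and the union bound \eqref{error_bnd} (choosing $c_0$ large enough that the Step~1 error also dominates a suitable $n^{-c}$), this will yield Theorem~\ref{thm:RMm1}.

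The genuine difficulty in the whole argument is not the Hoeffding step, which is routine, but the uniform separation guarantee of Proposition~\ref{prop:RMm1} --- namely, that any two distinct codewords in $\mathrm{RM}(m,1)$ sharing their first bit have expected trace-run counts differing by at least an absolute constant, independent of $m$. That is handled separately in Section~\ref{sec:prop1_proof}; once it is in hand, the proof of Claim~\ref{claim2} proceeds exactly as sketched above.
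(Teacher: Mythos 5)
Your proposal is correct and follows essentially the same route as the paper's own proof: both reduce the error event to the deviation event $\{|\overline{R}-\mathbb{E}[R_x]|\ge \Delta/2\}$ with $\Delta = 0.028$ supplied by Proposition~\ref{prop:RMm1}, and then apply Hoeffding's inequality with $R_x^i \in [0,n]$ and $k = O(n^2\log n)$. Your use of the event containment $\cE_2 \cap \cE_1^c \subseteq \{|\overline{R}-\mathbb{E}[R_x]|\ge \Delta/2\}$ in place of the paper's bound via $\P_x(\cE_2 \mid \cE_1^c)$ is only a cosmetic (and if anything slightly cleaner) difference.
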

\begin{proof}
Since $\P_x(\cE_2 \cap \cE_1^c) \le \P_x(\cE_2 \mid \cE_1^c)$, it suffices to bound the latter probability. Let $\mathbb{E}[R_{x}]= r_x \leq n$. From Proposition~\ref{prop:RMm1}, for any other $y \in \mathrm{RM}(m,1)$ 
having the same first bit as $x$, we have $|r_x-\mathbb{E}[R_{y}]| \geq \Delta := 0.028$. So, if the reconstruction algorithm does not make an error in Step~1 (i.e., given $\cE_1^c$), then at the end of Step~2, it will output $Z=x$ if $r_x-\Delta/2 < \overline{R} < r_x + \Delta/2$. Consequently, given $\cE_1^c$, if error event $\cE_2$ occurs, then $|\overline{R} - r_x| \geq \Delta/2$. 

Note that $k\overline{R} = \sum_{i=1}^k R_x^i$. Since $0 \leq R_x^i \leq n$, using Hoeffding's inequality, we obtain
$$\Pr(|\overline{R} - r_x| \geq \Delta/2) = \Pr(|k\overline{R} - kr_x| \geq k\Delta/2) \leq 2 \exp \biggl(-\frac{k\Delta^2}{2 n^2} \biggr)$$
Taking $k = c_3 n^2 \log n$ (for suitable $c_3 > 0$), we obtain 
$\P_x(\cE_2 \mid \cE_1^c) \leq 2 \cdot e^{-\frac{c_3 \Delta^2}{2} \log n}= 2 n^{-c_2}$. 
\end{proof}

The proof of Theorem~\ref{thm:RMm1} is completed by bounding the probability of \textsc{Reconstruction Error} in \eqref{error_bnd} using the bounds in Claims~\ref{claim1} and \ref{claim2}.

\subsection{Proof of Proposition 1}\label{sec:prop1_proof}
Recall from Eq.~\eqref{eq:ERx} that, for a sequence $x$, $\E[R_x] = \frac12(n - \alpha_x)$, where we define
$$
\alpha_x = \sum_{i,j \in \overline{\cS}_x: i < j} \biggl(\frac12\biggr)^{j-i} + 
\sum_{i,j \in {\cS}_x: i < j} \biggl(\frac12\biggr)^{j-i}. 
$$
For the purposes of this proof, we require a few more coefficients to be defined: 
\begin{align*}
   \beta_x \ &= \ \sum_{i,j \in \overline{\cS}_x: i < j} \biggl(\frac12\biggr)^{n-(j-i)} + \sum_{i,j \in {\cS}_x: i < j} \biggl(\frac12\biggr)^{n-(j-i)} \\
   \gamma_x \ &= \ \sum_{\substack{i \in \overline{\cS}_x, j \in \cS_x: \\ i < j}} \biggl(\frac12\biggr)^{n-(j-i)} + \sum_{\substack{i \in {\cS}_x, j \in \overline{\cS}_x: \\ i < j}} \biggl(\frac12\biggr)^{n-(j-i)} \\
   \delta_x \ &= \ \sum_{\substack{i \in \overline{\cS}_x, j \in \cS_x: \\ i < j}} \biggl(\frac12\biggr)^{j-i}  + \sum_{\substack{i \in {\cS}_x, j \in \overline{\cS}_x: \\ i < j}} \biggl(\frac12\biggr)^{j-i}.
\end{align*}


Our proof relies on the technical lemma stated next. While the coefficient $\delta_x$ 
does not appear in the statement of the lemma, it will be needed in its proof.
\begin{lemma}
    For $m\geq 4$ and $n=2^{m}$, conditions \emph{(C1)}--\emph{(C3)} below hold for any pair of distinct codewords $x, y \in \mathrm{RM}(m,1)$ having the same first bit, and condition \emph{(C4)} holds for any $x \in \mathrm{RM}(m,1)$:
    \begin{itemize}
        \item[] \emph{(C1)} \ $|\alpha_x - \alpha_y| \ \geq \ 0.06-3\sum\limits_{k=16}^{n/2} k \, (\frac{1}{2})^{k} $
        \smallskip
        \item[] \emph{(C2)} \ $|\alpha_x - \alpha_y| \, - \, |\beta_x - \beta_y| \ \geq \ 0$ 
        \smallskip
        \item[] \emph{(C3)} \ $|\alpha_x - \alpha_y| \, - \, |\beta_x - \gamma_y| \ \geq \ 0.06 - 4 \sum\limits_{k=16}^{n/2} k \, (\frac{1}{2})^{k} $ 
        \smallskip
        \item[] \emph{(C4)} \ $|\beta_x-\gamma_x| \ \geq \ 0.06 - 3 \sum\limits_{k=16}^{n/2} k \, (\frac{1}{2})^{k}$
%
   \end{itemize}
   \label{lem:conditions}
\end{lemma}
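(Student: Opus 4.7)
The plan is an induction on $m \geq 4$, carrying all four conditions through simultaneously and exploiting the doubling $\mathrm{RM}(m,1) = \{\widehat{x}: x \in \mathrm{RM}(m-1,1)\} \cup \{\widecheck{x}: x \in \mathrm{RM}(m-1,1)\}$. For the base case $m=4$ the tail sum $\sum_{k=16}^{n/2} k(1/2)^k$ is empty (the upper limit $n/2=8$ lies below the lower limit $16$), so (C1)--(C4) reduce to concrete numeric inequalities with right-hand sides $0.06$ and $0$. Since $|\mathrm{RM}(4,1)|=32$, this becomes a finite computation: evaluate $\alpha_x, \beta_x, \gamma_x, \delta_x$ on each of the 32 codewords and verify (C1)--(C3) for all $2\binom{16}{2}=240$ pairs sharing a first bit, together with (C4) on each codeword.

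For the inductive step, write $N=n/2$. Splitting each of the length-$n$ sums defining the four coefficients according to which half of $[1,n]$ the indices $i,j$ lie in, and, for cross-half pairs, setting $k := j-N$ and case-splitting on whether $i \lessgtr k$ and whether the length-$N$ bits $x_i$ and $x_k$ agree, yields concatenation identities such as
\begin{align*}
\alpha_{\widehat{x}} &= (2+(1/2)^N)\alpha_x + \beta_x + N(1/2)^N, \\
\alpha_{\widecheck{x}} &= 2\alpha_x + \gamma_x + (1/2)^N\delta_x,
\end{align*}
together with analogous formulas for $\beta_{\widehat{x}}, \beta_{\widecheck{x}}, \gamma_{\widehat{x}}, \gamma_{\widecheck{x}}, \delta_{\widehat{x}}, \delta_{\widecheck{x}}$. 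Subtracting these identities for a pair of distinct $x', y' \in \mathrm{RM}(m,1)$ sharing a first bit reduces each target inequality at level $m$ to a linear combination of length-$N$ gaps plus a cross-term error of size $O(N(1/2)^N)$. This error is comfortably dominated by the extra slack $3\sum_{k=N/2+1}^{N} k(1/2)^k$ that the level-$m$ bound offers over the level-$(m-1)$ bound.

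The pair $(x',y')$ falls into four patterns: both are $\widehat{\cdot}$'s; both are $\widecheck{\cdot}$'s; one of each with $x \ne y$; one of each with $x=y$ (note that $\widehat{x} \ne \widecheck{x}$ always holds). In the first three patterns the bases $x,y$ are distinct codewords of $\mathrm{RM}(m-1,1)$ with matching first bit, so the inductive hypothesis (C1)--(C3) applies directly: the factor $(2+(1/2)^N)$ multiplying $\alpha_x - \alpha_y$ dominates, and (C2) is exactly what lets the accompanying $\beta$- or $\gamma$-term be absorbed via the triangle inequality. The genuinely new pattern is the fourth, $x' = \widehat{x}$ against $y' = \widecheck{x}$: here the identities collapse to $\alpha_{x'} - \alpha_{y'} = (\beta_x - \gamma_x) + (1/2)^N(\alpha_x - \delta_x) + N(1/2)^N$, and (C4) for $x$ supplies the needed lower bound on $|\beta_x - \gamma_x|$, with the residual controlled by the crude estimate $|\alpha_x|, |\delta_x| \leq N$. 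Condition (C4) at level $m$ is handled analogously by specializing the identities at a single $x$ to both $\widehat{x}$ and $\widecheck{x}$.

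The main obstacle I anticipate is propagating (C2) itself, which is a two-sided comparison rather than a one-sided bound: in the first pattern one needs $|(2+(1/2)^N)(\alpha_x-\alpha_y)+(\beta_x-\beta_y)| \geq |(1+2(1/2)^N)(\beta_x-\beta_y)+(1/2)^N(\alpha_x-\alpha_y)|$, which requires a sign analysis of $A=\alpha_x-\alpha_y$ and $B=\beta_x-\beta_y$ that exploits $|A| \geq |B|$ (from (C2) at level $m-1$) even when $A$ and $B$ have opposite signs. Once (C2) is established at the new level, (C1) and (C3) follow cleanly by the triangle inequality combined with the inductive (C1), (C3) and the slack estimate above.
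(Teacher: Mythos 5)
Your proposal is correct and follows essentially the same route as the paper: induction on $m$ with all four conditions carried simultaneously, a finite verification at $m=4$ (where the tail sum is indeed empty), the same concatenation identities for $\alpha,\beta,\gamma,\delta$ under $x\mapsto \widehat{x}$ and $x\mapsto\widecheck{x}$, the same four-case split on the pair of codewords (with (C4) at the base supplying the bound in the $\widehat{x}$ vs.\ $\widecheck{x}$ case), and the same sign analysis of $\alpha_x-\alpha_y$ versus $\beta_x-\beta_y$ needed to propagate (C2). The error-absorption argument via the extra slack $3\sum_{k=n/2+1}^{n}k(1/2)^k$ also matches the paper's bookkeeping.
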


We remark here that we only require condition~(C1) of the lemma to complete the proof of Proposition~\ref{prop:RMm1}. The remaining conditions (C2)--(C4) are auxiliary conditions needed in our induction-based proof of the lemma, which we provide in the next section.

From (C1), we obtain that for any pair of distinct codewords $x,y \in \mathrm{RM}(m,1)$, $m \ge 4$,  having the same first bit,
\begin{equation}
    |\alpha_x-\alpha_y| \geq 0.057,  \label{eq:0.057}
\end{equation}
since $\sum\limits_{k=16}^{n/2} k \, (\frac{1}{2})^{k} \leq \sum\limits_{k=16}^{\infty} k \, (\frac{1}{2})^{k} = 17 \cdot (\frac{1}{2})^{15} < 0.001$.
Now, let $x$ and $y$ be two distinct codewords of $\mathrm{RM}(m,1)$, $m \ge 4$, 
such that $x_1 = y_1$. Via \eqref{eq:0.057}, we obtain
$$
   \bigl|\mathbb{E}[R_x] - \mathbb{E}[R_y]\bigr| \ = \ \frac{1}{2}|\alpha_x - \alpha_y| 
   \ \geq \ \frac{1}{2} \cdot 0.057 \ = \ 0.0285,
$$
thus establishing Proposition~\ref{prop:RMm1}.

\section{Proof of Lemma~\ref{lem:conditions}} \label{sec:conditions_proof}
The proof is by induction on $m$, with $m=4$ as the base case. We first argue that it suffices to prove the lemma for $x,y \in \mathrm{RM}(m,1)$ with $x_1=y_1=0$. This is because $\overline{\cS}_x = \cS_{\overline{x}}$, from which it is easy to check that $\alpha_x = \alpha_{\overline{x}}$, $\beta_x = \beta_{\overline{x}}$ and $\gamma_x = \gamma_{\overline{x}}$. Therefore, if codewords $x,y \in \mathrm{RM}(m,1)$ satisfy conditions (C1)--(C4), then so do the codewords $\overline{x}, \overline{y}$.

\renewcommand{\arraystretch}{1.5}
\begin{table}[ht!]
\caption{Values of $\alpha_x, \beta_x, \gamma_x$ for $x \in \mathrm{RM}(4,1)$ with $x_1 = 0$}
\label{table:RM41}
\centering
\begin{tabular}{|c@{\hskip 3pt}|c@{\hskip 3pt}|c@{\hskip 3pt}|c@{\hskip 3pt}|}
\hline
 Codeword $x$ & $\alpha_x$ & $\beta_x$ & $\gamma_x$ \\ \hline
0000000000000000  & $\frac{458753}{32768}$ & $\frac{65519}{32768}$ & 0  \\ \hline
0000000011111111 & $\frac{769}{64}$ & $\frac{247}{16384}$ & $\frac{65025}{32768}$ \\ \hline 
0000111100001111 & $\frac{8929}{1024}$ & $\frac{1811}{8192}$ & $\frac{58275}{32768}$ \\ \hline 
0000111111110000 & $\frac{336353}{32768}$ & $\frac{57869}{32768}$ & $\frac{3825}{16384}$ \\ \hline 
0011001100110011 & $\frac{23593}{4096}$ & $\frac{655}{1024}$ & $\frac{44559}{32768}$ \\ \hline 
0011001111001100 & $\frac{212153}{32768}$ & $\frac{44369}{32768}$ & $\frac{10575}{16384}$  \\ \hline 
0011110000111100 & $\frac{251033}{32768}$ & $\frac{41939}{32768}$ & $\frac{5895}{8192}$  \\ \hline 
0011110011000011 & $\frac{29101}{4096}$  & $\frac{11857}{16384}$ & $\frac{41805}{32768}$  \\ \hline 
0101010101010101 & $\frac{36409}{8192}$  & $\frac{7279}{8192}$ & $\frac{36403}{32768}$ \\ \hline 
0101010110101010 & $\frac{152861}{32768}$ & $\frac{36341}{32768}$ & $\frac{14589}{16384}$ \\ \hline
0101101001011010 & $\frac{164861}{32768}$ & $\frac{35591}{32768}$ & $\frac{3741}{4096}$ \\ \hline 
0101101010100101 & $\frac{39809}{8192}$ & $\frac{14983}{16384}$  &  $\frac{35553}{32768}$ \\ \hline 
0110011001100110 & $\frac{175637}{32768}$  & $\frac{34067}{32768}$  & $\frac{7863}{8192}$  \\ \hline 
0110011010011001 & $\frac{43259}{8192}$ & $\frac{15733}{16384}$ & $\frac{34053}{32768}$  \\ \hline 
0110100101101001 & $\frac{42179}{8192}$ & $\frac{3967}{4096}$ &  $\frac{33783}{32768}$ \\ \hline 
0110100110010110 & $\frac{170741}{32768}$ & $\frac{33761}{32768}$ & $\frac{15879}{16384}$ \\ \hline 
\end{tabular}
\end{table}

Table~\ref{table:RM41} lists the $\alpha_x$, $\beta_x$ and $\gamma_x$ coefficients needed to verify the lemma in the base case of $\mathrm{RM}(4,1)$. Indeed, from the table it can be verified that for any pair of distinct $x,y \in \mathrm{RM}(4,1)$ with $x_1=y_1=\0$, we have 
\begin{align*}
|\alpha_x-\alpha_y| \geq \frac{2025}{32768} & \geq 0.0617, \\
|\alpha_x-\alpha_y|-|\beta_x-\beta_y| & \geq 0, \\
|\alpha_x-\alpha_y|-|\beta_x-\gamma_y | \geq \frac{2003}{32768} & \geq 0.0611, \\ 
|\beta_x-\gamma_x| \geq \frac{2003}{32768} & \geq 0.0611. 
\end{align*}
Thus, the lemma holds for $\mathrm{RM}(4,1)$. 

Now, assuming that the lemma holds for $\mathrm{RM}(m,1)$ for some $m \ge 4$, we will show that it holds for $\mathrm{RM}(m+1,1)$ as well. Recall that codewords of $\mathrm{RM}(m+1,1)$ are of the form $\widehat{x} = x \| x$ or $\widecheck{x} = x \| \overline{x}$ for some $x \in \mathrm{RM}(m,1)$. The coefficients $\alpha, \beta, \gamma$ for $\widehat{x}$ and $\widecheck{x}$ can be expressed in terms of the coefficients for $x$, as stated in the lemma below. 

\begin{lemma}
    For a binary sequence $x$ of length $n$, we have 

\begin{enumerate}[label=\emph{(\alph*)}]

\item $\alpha_{\widehat{x}} \ = \ 2 \alpha_x + \beta_x + {(\frac{1}{2})}^{n} \alpha_x + n {(\frac{1}{2})}^{n}$

\smallskip

\item $\alpha_{\widecheck{x}} \ = \ 2 \alpha_x + \gamma_x + {(\frac{1}{2})}^{n} \delta_x$

\smallskip

\item $\beta_{\widehat{x}} \ = \beta_x + {(\frac{1}{2})}^{n-1} \beta_x + {(\frac{1}{2})}^{n} \alpha_x + n {(\frac{1}{2})}^{n}$

\smallskip

\item $\beta_{\widecheck{x}} \ = \gamma_x + {(\frac{1}{2})}^{n-1} \beta_x + {(\frac{1}{2})}^{n} \delta_x$

\smallskip

\item $\gamma_{\widehat{x}} \ = \gamma_x + {(\frac{1}{2})}^{n-1} \gamma_x + {(\frac{1}{2})}^{n} \delta_x$

\smallskip

\item $\gamma_{\widecheck{x}} \ = \beta_x + {(\frac{1}{2})}^{n-1} \gamma_x + {(\frac{1}{2})}^{n} \alpha_x + n {(\frac{1}{2})}^{n}$
\end{enumerate}
\label{lemma_update}
\end{lemma}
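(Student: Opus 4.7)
The plan is to prove all six identities in Lemma~\ref{lemma_update} by direct expansion of each defining sum over the length-$2n$ sequence ($\widehat{x}$ or $\widecheck{x}$), partitioning the ordered index pairs $(i,j)$ with $1 \le i < j \le 2n$ into three classes: (i) both indices in the first half $\{1,\ldots,n\}$, (ii) both in the second half $\{n+1,\ldots,2n\}$, and (iii) crossing pairs $i \le n < j$. The first half is always $x$, while the second half is $x$ in $\widehat{x}$ and $\overline{x}$ in $\widecheck{x}$; the crucial observation is that complementing the second half swaps the roles of same-type and opposite-type pairs \emph{only} for crossing pairs, not for within-half pairs.

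For the two within-half classes, the same-type/opposite-type partition of pairs within either half coincides with the corresponding partition of $x$ itself, so the contribution is immediate once the weight exponent is reconciled. The $\alpha$-type weight $(1/2)^{j-i}$ is unchanged between the length-$n$ and length-$2n$ sums, so the two within-half contributions together give $2\alpha_x$ in (a) and (b). The $\beta$- and $\gamma$-type weight $(1/2)^{2n-(j-i)}$ carries an extra factor $(1/2)^n$ relative to the length-$n$ weight $(1/2)^{n-(j-i)}$, which accounts for the $(1/2)^{n-1}\beta_x$ terms in (c), (d) and the $(1/2)^{n-1}\gamma_x$ terms in (e), (f).

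For crossing pairs I would parametrize $j = n + j'$ with $1 \le j' \le n$ and further split by $i < j'$, $i = j'$, $i > j'$. Setting $a = \min(i,j')$ and $b = \max(i,j')$, the $\alpha$-type weight $(1/2)^{j-i}$ becomes $(1/2)^n (1/2)^{b-a}$ when $i < j'$ and $(1/2)^{n-(b-a)}$ when $i > j'$, while the $\beta$-type weight $(1/2)^{2n-(j-i)}$ swaps these two sides; on the diagonal $i = j'$ the weight is exactly $(1/2)^n$ in both families. The type condition on the crossing pair translates to a same-type constraint on $(x_a, x_b)$ in $\widehat{x}$ and to the complementary constraint in $\widecheck{x}$, so each of the four off-diagonal combinations (short or long weight $\times$ same or opposite type in $x$) matches exactly one of $\alpha_x, \beta_x, \gamma_x, \delta_x$ by definition; this is precisely why the auxiliary coefficient $\delta_x$ needs to be introduced. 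The diagonal subcase $i = j'$ is type-deterministic: the pair $(i, n+i)$ is always same-type in $\widehat{x}$ and always opposite-type in $\widecheck{x}$, contributing $n(1/2)^n$ exactly in identities (a), (c), (f) and nothing in (b), (d), (e).

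The proof is then a finite but careful bookkeeping exercise, repeating the same three-way crossing partition for six different coefficients. There is no real conceptual obstacle; the only pitfall is mislabeling one of the six instances, because the two binary choices (same-vs.-opposite type condition, short-vs.-long crossing weight) vary independently across the identities. I would organize the write-up by first stating a single general formula for the contribution of each crossing subcase as a function of these two choices, and then reading off (a)--(f) as the six specialisations, which keeps the argument modular and prevents redoing essentially the same computation half a dozen times.
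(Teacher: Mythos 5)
Your proposal is correct and follows essentially the same route as the paper's own proof: the paper (which writes out only part~(b) as an illustration, in Appendix~\ref{app:part_b}) likewise splits the index pairs $1 \le i < j \le 2n$ into the two within-half classes and the crossing class, and resolves the crossing pairs by the $i<j'$ versus $i>j'$ dichotomy, with the diagonal $i=j'$ supplying the $n(\frac{1}{2})^{n}$ terms exactly where you say it does. Your suggestion of a single parametrized formula for the crossing contribution is a reasonable way to organize the remaining five cases, but it is the same computation.
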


The expressions in the lemma are obtained through straightforward manipulations. As an illustration, we prove part~(b) in Appendix~\ref{app:part_b}.

We need to prove conditions (C1)--(C3) in Lemma~\ref{lem:conditions} for each pair of distinct codewords in $\mathrm{RM}(m+1,1)$ having the same first bit, while we need to prove condition (C4) for each individual codeword in $\mathrm{RM}(m+1,1)$. For pairs of distinct codewords, there are four cases to consider:
\begin{itemize}
    \item Case~1: The codewords are $\widehat{x}=x\|x$ and $\widecheck{x}=x\|\overline{x}$ for some $x \in \mathrm{RM}(m,1)$
    \item Case~2: The codewords are $\widehat{x}=x\|x$ and $\widehat{y}=y\|y$ for distinct $x,y \in \mathrm{RM}(m,1)$ sharing the same first bit.
    \item Case~3: The codewords are $\widecheck{x}=x\|\overline{x}$ and $\widecheck{y}=y\|\overline{y}$ for distinct $x,y \in \mathrm{RM}(m,1)$ sharing the same first bit.
    \item Case~4: The codewords are $\widehat{x}=x\|x$ and $\widecheck{y}=y\|\overline{y}$ for distinct $x,y \in \mathrm{RM}(m,1)$ sharing the same first bit.
\end{itemize}
In each of these cases, our proof of conditions (C1)--(C3) makes use of some properties of the $\alpha,\beta,\gamma,\delta$ coefficients, which we record here. For any binary sequence $x$ of length $n$,
\begin{align*}
\alpha_x + \delta_x & = \sum_{i,j \in [n]: i < j} {\biggl(\frac12\biggr)}^{j-i} =: \ a(n), \\ 
\beta_x + \gamma_x &= \sum_{i,j \in [n]: i < j} {\biggl(\frac12\biggr)}^{n-(j-i)} =: \ b(n). 
\end{align*}
In particular, the sums on the left-hand side above depend on the sequence $x$ only via its length $n$. Note further that 
\begin{align}
a(n) &= \sum_{k=1}^{n-1} (n-k) {\biggl(\frac12\biggr)}^k = n-2+{\biggl(\frac12\biggr)}^{n-1} \le \ n  \label{eq:an}\\
b(n) &= \sum_{k=1}^{n-1} k {\biggl(\frac12\biggr)}^k = \ 2 - (n+1) {\biggl(\frac12\biggr)}^{n-1} 
 \ \le \ 2 \label{eq:bn}
\end{align}
Therefore, 
\begin{align}
|\alpha_x - \delta_x| \ &\le \ \alpha_x+\delta_x \ = \ a(n) \ \le \ n \label{ineq:a-d} \\
|\beta_x - \gamma_x| \ &\le \ \beta_x+\gamma_x \ = \ b(n) \ \le \ 2 \label{ineq:b-c}.
\end{align}
\\

\begin{proof}[Proof of (C1)--(C3) in Case~1]  For (C1), we argue as follows: via Lemma~\ref{lemma_update}(a),(b),
\begin{align}
|\alpha_{\widehat{x}}-\alpha_{\widecheck{x}}| &= \left |\beta_{x}-\gamma_{x} + {\textstyle {(\frac{1}{2})}^{n}} (\alpha_x-\delta_x) + n {\textstyle (\frac{1}{2})}^{n}\right|  \notag \\
    &\geq |\beta_{x}-\gamma_{x}| - {\textstyle {(\frac{1}{2})}^{n} |\alpha_x-\delta_x| - n{(\frac{1}{2})}^{n} } \notag \\
    &\stackrel{\mathrm{(i)}}{\geq} {\textstyle 0.06-3\sum\limits_{k=16}^{n/2} k {(\frac{1}{2})}^{k} - 2n{(\frac{1}{2})}^{n}} \notag \\
    &\geq {\textstyle 0.06-3\sum\limits_{k=16}^{n} k {(\frac{1}{2})}^{k}} \label{C1_eq1} 
\end{align}
where to obtain the inequality labeled (i), we used \eqref{ineq:a-d} and the induction hypothesis that (C4) holds for $\mathrm{RM}(m,1)$.

To show (C2), we merely observe from Lemma~\ref{lemma_update}(a)--(d) that $\alpha_{\widehat{x}}-\alpha_{\widecheck{x}} = \beta_{\widehat{x}}-\beta_{\widecheck{x}}$.


For (C3), we first note that, from Lemma~\ref{lemma_update}(c),(f), we have
$$
|\beta_{\widehat{x}}-\gamma_{\widecheck{x}}| = {\textstyle  {(\frac{1}{2})}^{n-1} \, |\beta_{x}-\gamma_{x}| \stackrel{\mathrm{(ii)}}{\le}  {(\frac{1}{2})}^{n-1} \cdot 2  \le n {(\frac{1}{2})}^n},
$$
where the inequality labeled~(ii) is by virtue of \eqref{ineq:b-c}. Then, using \eqref{C1_eq1}, we get
\begin{align*}
  |\alpha_{\widehat{x}}-\alpha_{\widecheck{x}}|-|\beta_{\widehat{x}}-\gamma_{\widecheck{x}}| & \geq {\textstyle 0.06-3\sum\limits_{k=16}^{n} k {(\frac{1}{2})}^{k} - n{(\frac{1}{2})}^{n} } \\
  & \geq {\textstyle 0.06-4\sum\limits_{k=16}^{n} k {(\frac{1}{2})}^{k}},
\end{align*}
which is precisely (C3).
\end{proof}

The proofs of (C1)--(C3) in the remaining Cases 2--4 are more involved, and are given in Appendix~\ref{app:cases2-4}.

Finally, we prove that condition (C4) holds for any codeword in $\mathrm{RM}(m+1,1)$.
\begin{proof}[Proof of (C4)] First, consider $\widehat{x} = x \| x$ for some $x \in \mathrm{RM}(m,1)$. From Lemma~\ref{lemma_update}(c),(e), we have 
\begin{align*}
  |\beta_{\widehat{x}} &- \gamma_{\widehat{x}}| \\
  &=  {\textstyle \left|\bigl(1+{(\frac12)}^{n-1}\bigr) (\beta_{x}-\gamma_{x}) + {(\frac{1}{2})}^{n}(\alpha_{x}-\delta_{x}) + n{(\frac{1}{2})}^{n} \right| } \\
  & \geq  {\textstyle|\beta_{x} \!-\! \gamma_{x}| - {(\frac{1}{2})}^{n-1} |\beta_x \!-\! \gamma_x| - {(\frac{1}{2})}^{n} |\alpha_{x} \!-\! \delta_{x}| - n{(\frac{1}{2})}^{n} } \\
  & \stackrel{\mathrm{(iii)}}{\geq} {\textstyle 0.06-3\sum\limits_{k=16}^{n/2} k{(\frac{1}{2})}^{k}  - 3n {(\frac{1}{2})}^{n} } \\
  & \geq {\textstyle  0.06-3\sum\limits_{k=16}^{n} k {(\frac{1}{2})}^{k} }
\end{align*}
where to obtain the inequality labeled (iii), we used \eqref{ineq:a-d}, \eqref{ineq:b-c}, and the induction hypothesis that (C4) holds for $\mathrm{RM}(m,1)$.

Next, consider $\widecheck{x} = x \| \overline{x}$ for some $x \in \mathrm{RM}(m,1)$. Using similar arguments as above,
\begin{align*}
  |\beta_{\widecheck{x}} &- \gamma_{\widecheck{x}}| \\
  &=  {\textstyle \left|\bigl(1-{(\frac12)}^{n-1}\bigr) (\gamma_{x}-\beta_{x}) - {(\frac{1}{2})}^{n}(\alpha_{x}-\delta_{x}) - n{(\frac{1}{2})}^{n} \right| } \\
  & \geq  {\textstyle|\gamma_{x} \!-\! \beta_{x}| - {(\frac{1}{2})}^{n-1} |\gamma_x \!-\! \beta_x| - {(\frac{1}{2})}^{n} |\alpha_{x} \!-\! \delta_{x}| - n{(\frac{1}{2})}^{n} } \\
  & \ge {\textstyle 0.06-3\sum\limits_{k=16}^{n/2} k{(\frac{1}{2})}^{k}  - 3n {(\frac{1}{2})}^{n} } \\
  & \geq {\textstyle  0.06-3\sum\limits_{k=16}^{n} k {(\frac{1}{2})}^{k} }
\end{align*}
Thus, (C4) holds for all codewords in $\mathrm{RM}(m+1,1)$. 
\end{proof}

This completes the proof of Lemma~\ref{lem:conditions}.

\section{Conclusion}

In this paper, we derived, for a given binary sequence $x$, an expression for the expected number of runs, $\E[R_x]$, in a random trace obtained by passing $x$ through an i.i.d.\ deletion channel with deletion probability $q$. We then proved that, when $q=\frac12$, the average number of runs in $\widetilde{O}(n^2)$ traces can be used as a statistic to distinguish (w.h.p.) between the codewords of a first-order Reed-Muller code of blocklength $n$. To prove this, we used a key property of $\mathrm{RM}(m,1)$: for distinct codewords $x,y \in \mathrm{RM}(m,1)$ that share the same first bit, $|\E[R_x] - \E[R_y]| \ge \Delta$, for some constant $\Delta > 0$ that does not depend on $m$. This property does not hold, for example, for higher-order RM codes $\mathrm{RM}(m,r)$, $r \ge 2$, so $\widetilde{O}(n^2)$ traces might not be sufficient to reconstruct $x \in \mathrm{RM}(m,r)$ (up to complementation) solely using the average number of runs in the traces as a statistic. Nonetheless, it should be possible to refine this statistic further or combine it with other statistics to better effect.

 \bibliographystyle{ieeetr}
\bibliography{references}

\appendices

\section{Proof of Lemma \ref{lemma_update}(b)} \label{app:part_b}
We start with
$$
\alpha_{\widecheck{x}} 
= 
\sum_{i,j \in \overline{\cS}_{\widecheck{x}}: i < j} {\biggl(\frac12\biggr)}^{j-i} + 
\sum_{i,j \in {\cS}_{\widecheck{x}}: i < j} {\biggl(\frac12\biggr)}^{j-i}. 
$$
Each of these summations is over indices $1 \le i < j \le 2n$. We split each summation into three parts: 
\begin{enumerate}
    \item a summation over indices $1 \le i < j \le n$,
    \item a summation over indices $n+1 \le i < j \le 2n$, and
    \item a summation over indices $1 \le i \le n, \ n+1 \le j \le n$.
\end{enumerate}
Thus,
\begin{align*}
\sum_{\substack{i,j \in \overline{\cS}_{\widecheck{x}}: \\ i < j}} & {\biggl(\frac12\biggr)}^{j-i} \\
&= \sum_{\substack{i,j \in \overline{\cS}_{\widecheck{x}}: \\ 1 \le i < j \le n}} {\biggl(\frac12\biggr)}^{j-i} 
+ \sum_{\substack{i,j \in \overline{\cS}_{\widecheck{x}}: \\ n+1 \le i < j \le 2n}}{\biggl(\frac12\biggr)}^{j-i} \\
 & \quad \quad + \sum_{\substack{i,j \in \overline{\cS}_{\widecheck{x}}:\\ 1 \le i \le n, \, n+1 \le j \le 2n}}{\biggl(\frac12\biggr)}^{j-i} \\
&= \sum_{\substack{i,j \in \overline{\cS}_{x}: \\ 1 \le i < j \le n}} {\biggl(\frac12\biggr)}^{j-i} 
+ \sum_{\substack{i,j \in \cS_x: \\ 1 \le i < j \le n}}{\biggl(\frac12\biggr)}^{j-i} \\
 & \quad \quad + \sum_{\substack{i \in \overline{\cS}_x, j \in \cS_x:\\ 1 \le i,j \le n}}{\biggl(\frac12\biggr)}^{n+j-i} \\
&= \alpha_x + \sum_{\substack{i \in \overline{\cS}_x, j \in \cS_x:\\ 1 \le i,j \le n}}{\biggl(\frac12\biggr)}^{n+j-i} \\
&= \alpha_x + \sum_{\substack{i \in \overline{\cS}_x, j \in \cS_x:\\ i < j}}{\biggl(\frac12\biggr)}^{n+j-i} + \sum_{\substack{i \in \overline{\cS}_x, j \in \cS_x:\\ i > j}}{\biggl(\frac12\biggr)}^{n-(i-j)}
\end{align*}
Analogously,
\begin{align*}
\sum_{\substack{i,j \in {\cS}_{\widecheck{x}}: \\ i < j}} & {\biggl(\frac12\biggr)}^{j-i} \\
 &= \alpha_x + \sum_{\substack{i \in {\cS}_x, j \in \overline{\cS}_x:\\ i < j}}{\biggl(\frac12\biggr)}^{n+j-i} + \sum_{\substack{i \in {\cS}_x, j \in \overline{\cS}_x:\\ i > j}}{\biggl(\frac12\biggr)}^{n-(i-j)}.
\end{align*}
Therefore,
\begin{align*}
\alpha_{\widecheck{x}}  
&= 2\alpha_x + \sum_{\substack{i \in \overline{\cS}_x, j \in \cS_x:\\ i < j}}{\biggl(\frac12\biggr)}^{n+j-i} + \sum_{\substack{i \in \overline{\cS}_x, j \in \cS_x:\\ i > j}}{\biggl(\frac12\biggr)}^{n-(i-j)} \\
\\
& \quad \quad + \sum_{\substack{i \in {\cS}_x, j \in \overline{\cS}_x:\\ i < j}}{\biggl(\frac12\biggr)}^{n+j-i} + \sum_{\substack{i \in {\cS}_x, j \in \overline{\cS}_x:\\ i > j}}{\biggl(\frac12\biggr)}^{n-(i-j)}
\\
&= 2\alpha_{x}+{\biggl(\frac12\biggr)}^n \delta_{x}+\gamma_{x}.
\end{align*}

\bigskip

\section{Proofs of (C1)--(C3) in Cases 2--4} \label{app:cases2-4}
 First let us demonstrate a few more properties of coefficients $\alpha,\beta,\gamma,\delta$ that we will be using in our proofs. As can be observed from their definitions, coefficients $\alpha,\beta,\gamma,\delta$ are non-negative. Thus, for a sequence $x$ of length $n$, using \eqref{eq:an} and \eqref{eq:bn}, we get   $0\leq\alpha_{x}\leq n , \  0\leq \delta_{x} \leq n , \ 0 \leq \beta_{x} \leq 2 , \ 0 \leq \gamma_{x} \leq 2 $. As a result, for any two sequences $x$ and $y$ of length $n$, 
 \begin{align}
     |\alpha_x-\alpha_y| \leq n \label{eq:P1} \\
     |\alpha_x-\delta_y| \leq n \label{eq:P3} \\
     |\beta_x-\beta_y| \leq 2 \label{eq:P4} \\
     |\beta_x-\gamma_y| \leq 2 \label{eq:P6}
 \end{align}
 Moreover, 
 \begin{align}
    (\beta_{x}-\beta_{y})&=(b(n)-\gamma_{x}-b(n)+\gamma_{y})  =-(\gamma_{x}-\gamma_{y}) \label{eq:P7} \\
    (\alpha_{x}-\alpha_{y})&=(a(n)-\delta_{x}-a(n)+\delta_{y})=-(\delta_{x}-\delta_{y}) \label{eq:P8} \\
    |\beta_{x}-\gamma_{y}|&=|b(n)-\gamma_{x}-b(n)+\beta_{y}|=|\beta_{y}-\gamma_{x}|  \label{eq:P9}
 \end{align}
\subsection{Proof of (C1)--(C3) in Case~2}
Using Lemma \ref{lemma_update}(a),(c), we have
\begin{align*}
      |\alpha_{\widehat{x}}-\alpha_{\widehat{y}}|&= \textstyle{\left|  \bigl(2+(\frac{1}{2})^{n} \bigr) (\alpha_{x}-\alpha_{y})+\beta_{x}-\beta_{y}\right|} \\
    |\beta_{\widehat{x}}-\beta_{\widehat{y}}| &= \textstyle{\left|\bigl(1+(\frac{1}{2})^{n-1} \bigr)(\beta_{x}-\beta_{y})+(\frac{1}{2})^{n}(\alpha_{x}-\alpha_{y})\right|}
\end{align*}
To show (C1) and (C2), we will split this case into two subcases and verify (C1) and (C2) for each. \\

\begin{enumerate}[label=\emph{(\roman*)}]
 
\item If $(\alpha_{x}-\alpha_{y})$ and $(\beta_{x}-\beta_{y})$ have the same sign, then
\begin{align}
  |\alpha_{\widehat{x}}-\alpha_{\widehat{y}}| & =\textstyle{ \bigl(2+(\frac{1}{2})^{n} \bigr) |\alpha_{x}-\alpha_{y}|+|\beta_{x}-\beta_{y}|} \label{eq:C2_1}
\end{align}
For (C2),
\begin{align*}
     |\beta_{\widehat{x}}-\beta_{\widehat{y}}| & = \textstyle{ \bigl(1+(\frac{1}{2})^{n-1} \bigr)|\beta_{x}-\beta_{y}|+{(\frac{1}{2})}^{n}|\alpha_{x}-\alpha_{y}|}\\
    & \stackrel{\mathrm{(iv)}}{\leq} \textstyle{ \bigl(1+3(\frac{1}{2})^{n} \bigr)|\alpha_{x}-\alpha_{y}|} \\
    &\stackrel{\mathrm{(v)}}{\leq} |\alpha_{\widehat{x}}-\alpha_{\widehat{y}}| 
\end{align*} 
where we derive inequality labeled (iv) by applying the induction hypothesis that (C2) holds for $\mathrm{RM}(m,1)$, while inequality labeled (v) is obtained directly from \eqref{eq:C2_1}. \\

\item If $(\alpha_{x}-\alpha_{y})$ and $(\beta_{x}-\beta_{y})$ have different signs, then 
\begin{align}
 |\alpha_{\widehat{x}}-\alpha_{\widehat{y}}| 
    & \geq \textstyle{ \bigl(1+(\frac{1}{2})^{n} \bigr)|\alpha_{x}-\alpha_{y}|} \label{eq:C2_2}
\end{align}
where we get the above inequality using the induction hypothesis that (C2) holds for $\mathrm{RM}(m,1)$. \\

For (C2),
\begin{align*}
\ |\beta_{\widehat{x}}-\beta_{\widehat{y}}|
    & \leq \textstyle{|\beta_{x}\!-\beta_{y}|+(\frac{1}{2})^{n}\left| (\alpha_{x}\!-\alpha_{y})+2(\beta_{x}\!-\beta_{y}) \right|} \\
    & = \textstyle{|\beta_{x}\!-\beta_{y}| +(\frac{1}{2})^{n}\left| \ |\alpha_{x}\!-\alpha_{y}|-2|\beta_{x}\!-\beta_{y}| \ \right|} \\
    & \stackrel{\mathrm{(vi)}}{\leq} \textstyle{|\alpha_{x}\!-\alpha_{y}|+(\frac{1}{2})^{n}  |\alpha_{x}\!-\alpha_{y}|}  \\
    & \stackrel{\mathrm{(vii)}}{\leq} |\alpha_{\widehat{x}}-\alpha_{\widehat{y}}| 
\end{align*}
where the inequality labeled (vi) is derived using the induction hypothesis that (C2) holds for $\mathrm{RM}(m,1)$, while the inequality labeled (vii) follows directly from \eqref{eq:C2_2}.
\end{enumerate} 

To show (C1), from (\ref{eq:C2_1}) and (\ref{eq:C2_2}), we observe that in both the subcases 
\begin{align}
|\alpha_{\widehat{x}}-\alpha_{\widehat{y}}| &\geq |\alpha_{x}-\alpha_{y}| \label{eq:C2_3} \\
    &\stackrel{\mathrm{(viii)}}{\geq} \textstyle{0.06-3\sum\limits_{k=16}^{n/2}k(\frac{1}{2})^{k} } \notag\\
    &\geq \textstyle{0.06-3\sum\limits_{k=16}^{n} k(\frac{1}{2})^{k} } \notag
\end{align}
where the inequality labeled (viii) is obtained using the induction hypothesis that (C1) holds for $\mathrm{RM}(m,1)$. \\

For (C3), using Lemma \ref{lemma_update}(c),(e), we have
\begin{align*}
     &|\beta_{\widehat{x}}-\gamma_{\widehat{y}}| \\\
     &= \textstyle{\left| (\beta_{x}\!-\gamma_{y})+(\frac{1}{2})^{n-1}(\beta_x\!-\gamma_y)+(\frac{1}{2})^{n}(\alpha_{x}\!-\delta_{y}) + n(\frac{1}{2})^{n}  \right|} \\
     & \leq \textstyle{|\beta_x-\gamma_y|+(\frac{1}{2})^{n-1}|\beta_x-\gamma_y|+(\frac{1}{2})^{n}|\alpha_{x}\!-\delta_{y}| + n(\frac{1}{2})^{n} } \\
    & \stackrel{\mathrm{(ix)}}{\leq} \textstyle{ |\alpha_{x}-\alpha_{y}|-0.06+4 \sum\limits_{k=16}^{n/2} k (\frac{1}{2})^{k}+ 3n(\frac{1}{2})^{n}} \\
    & \stackrel{\mathrm{(x)}}{\leq} \textstyle{ |\alpha_{\widehat{x}}-\alpha_{\widehat{y}}|-0.06+4\sum\limits_{k=16}^{n}k(\frac{1}{2})^{k}}
\end{align*}
Using \eqref{eq:P3}, \eqref{eq:P6}, and the induction hypothesis that (C3) holds for $\mathrm{RM}(m,1)$, we establish the inequality labeled (ix). Meanwhile, \eqref{eq:C2_3} leads to the inequality labeled (x).

\medskip

\subsection{Proof of (C1)--(C3) in Case~3}
For (C1), using Lemma \ref{lemma_update}(b), we have
\begin{align}
    \quad \ |\alpha_{\widecheck{x}}-\alpha_{\widecheck{y}}|  &= \textstyle{|2(\alpha_{x}-\alpha_{y})+(\frac{1}{2})^{n}(\delta_{x}-\delta_{y}) +\gamma_{x}-\gamma_{y} | } \notag \\
    & \stackrel{\mathrm{(xi)}}{=} \textstyle|2(\alpha_{x}-\alpha_{y})-(\frac{1}{2})^{n}(\alpha_{x}-\alpha_{y}) - (\beta_{x}-\beta_{y}) | \notag \\
    & \geq \textstyle{\bigl(2-(\frac{1}{2})^{n}  \bigr)|\alpha_{x}-\alpha_{y}|-|\beta_{x}-\beta_{y}|} \notag \\
    & \stackrel{\mathrm{(xii)}}{\geq} \textstyle{\bigl(1-(\frac{1}{2})^{n} \bigr) |\alpha_{x}-\alpha_{y}| }\label{eq:C3_1} \\
    & \stackrel{\mathrm{(xiii)}}{\geq} \textstyle{0.06-3\sum\limits_{k=16}^{n/2} k(\frac{1}{2})^{k} - n(\frac{1}{2})^{n}}  \notag  \\
     & \geq \textstyle{0.06-3\sum\limits_{k=16}^{n} k(\frac{1}{2})^{k}}\notag 
\end{align} 
where the equality labeled (xi) is derived from \eqref{eq:P7} and \eqref{eq:P8}. The inequality labeled (xii) follows from the induction hypothesis that (C2) holds for $\mathrm{RM}(m,1)$. Finally, the inequality labeled (xiii) is obtained using \eqref{eq:P1} and the induction hypothesis that (C1) holds for $\mathrm{RM}(m,1)$. \\

For (C2), using Lemma \ref{lemma_update}(d), we have
\begin{align*}
      \quad \ |\beta_{\widecheck{x}}-\beta_{\widecheck{y}}|
      &=\textstyle{|\gamma_{x} \!-\gamma_{y}+(\frac{1}{2})^{n-1}(\beta_{x} \!-\beta_{y}) +(\frac{1}{2})^{n}(\delta_{x} \!-\delta_{y}) |} \\
      &\stackrel{\mathrm{(xiv)}}{=} \textstyle{ |\beta_{x}\!-\!\beta_{y}\!-(\frac{1}{2})^{n-1}(\beta_{x}\!-\!\beta_{y})\!+(\frac{1}{2})^{n}(\alpha_{x}\!-\!\alpha_{y})| }\\
      &\leq \textstyle{\left(1-(\frac{1}{2})^{n-1} \right) |\beta_{x}-\beta_{y}|+(\frac{1}{2})^{n} |\alpha_{x}-\alpha_{y}| }\\
      & \stackrel{\mathrm{(xv)}}{\leq} \textstyle{ \bigl(1-(\frac{1}{2})^{n} \bigr) |\alpha_{x}-\alpha_{y}| }\\
      & \stackrel{\mathrm{(xvi)}}{\leq}  |\alpha_{\widecheck{x}}-\alpha_{\widecheck{y}}| 
\end{align*}
where the equality (xiv) is obtained directly from \eqref{eq:P7} and \eqref{eq:P8}. The inequality labeled (xv) is established using the induction hypothesis that (C2) applies to $\mathrm{RM}(m,1)$, while inequality (xvi) results from \eqref{eq:C3_1}. \\

For (C3), using Lemma \ref{lemma_update}(d),(f), we have
\begin{align*}
    & |\beta_{\widecheck{x}}-\gamma_{\widecheck{y}}| \\ &= \textstyle{ | \gamma_{x}- \beta_{y}+(\frac{1}{2})^{n-1}(\beta_{x}-\gamma_{y})+(\frac{1}{2})^{n}(\delta_{x}-\alpha_{y})-n(\frac{1}{2})^{n} | } \\
    & \stackrel{\mathrm{(xvii)}}{\leq} \textstyle{ |\beta_{x}- \gamma_{y}|+3n(\frac{1}{2})^{n} } \\
    & \stackrel{\mathrm{(xviii)}}{\leq} \textstyle{ |\alpha_{x}-\alpha_{y}|-0.06+4\sum\limits_{k=16}^{n/2} k(\frac{1}{2})^{k}  + 3n(\frac{1}{2})^{n} } \\
    & \stackrel{\mathrm{(xix)}}{\leq} \textstyle{ |\alpha_{\widecheck{x}}-\alpha_{\widecheck{y}}| + n(\frac{1}{2})^{n}  - 0.06+4\sum\limits_{k=16}^{n/2} k(\frac{1}{2})^{k} + 3n(\frac{1}{2})^{n} } \\
    & \leq \textstyle{|\alpha_{\widecheck{x}}-\alpha_{\widecheck{y}}| - 0.06+4\sum\limits_{k=16}^{n} k(\frac{1}{2})^{k} }
\end{align*}
where the inequality (xvii) is obtained using \eqref{eq:P3}, \eqref{eq:P6}, and \eqref{eq:P9}. The inequality labeled (xviii) is established by applying the induction hypothesis that (C3) holds for $\mathrm{RM}(m,1)$, while inequality (xix) results from \eqref{eq:P1} and \eqref{eq:C3_1}. \\

\subsection{Proof of (C1)--(C3) in Case~4}
For (C1), using lemma \ref{lemma_update}(a),(b), we have
\begin{align}
    & |\alpha_{\widehat{x}}-\alpha_{\widecheck{y}}| \notag\\
    &=\textstyle{|2(\alpha_{x} - \alpha_{y})+\beta_{x}- \gamma_{y}+(\frac{1}{2})^{n}(\alpha_{x}-\delta_{y})+n(\frac{1}{2})^{n}|} \notag\\
    & \geq \textstyle{ 2|\alpha_{x}-\alpha_{y}|-|\beta_{x}- \gamma_{y}| - (\frac{1}{2})^{n}|\alpha_x-\delta_y| - n(\frac{1}{2})^{n}} \notag\\
    & \stackrel{\mathrm{(xx)}}{\geq} \textstyle{ |\alpha_{x}\!-\alpha_{y}|+0.06-4 \sum\limits_{k=16}^{n/2} k(\frac{1}{2})^{k}  - 2n(\frac{1}{2})^{n}} \label{eq:C4_1}\\
    & \stackrel{\mathrm{(xxi)}}{\geq} \textstyle{0.06-3\sum\limits_{k=16}^{n/2} k(\frac{1}{2})^{k}  + 0.06-4 \sum\limits_{k=16}^{n} k(\frac{1}{2})^{k}  } \notag\\
    & \stackrel{\mathrm{(xxii)}}{\geq} 0.12-0.007 = 0.113 \notag
\end{align} 
where the inequality labeled (xx) is derived using \eqref{eq:P3} along with the induction hypothesis that (C3) holds for $\mathrm{RM}(m,1)$. The inequality labeled (xxi) is established based on the induction hypothesis that (C1) holds for $\mathrm{RM}(m,1)$. Finally, inequality (xxii) follows from the fact that $\sum\limits_{k=16}^{n} k \, (\frac{1}{2})^{k} \leq \sum\limits_{k=16}^{\infty} k \, (\frac{1}{2})^{k} = 17 \cdot (\frac{1}{2})^{15} < 0.001$. \\

For (C2), using lemma \ref{lemma_update}(c),(d), we have
\begin{align*}
   & |\beta_{\widehat{x}}-\beta_{\widecheck{y}}|\\
   &= \textstyle{| \beta_{x} \!- \gamma_{y} + (\frac{1}{2})^{n-1}(\beta_{x}\!-\beta_{y})+(\frac{1}{2})^{n}(\alpha_{x}\!-\delta_{y})+n(\frac{1}{2})^{n}|} \\
   & \stackrel{\mathrm{(xxiii)}}{\leq} \textstyle{ |\alpha_{x}-\alpha_{y}|-0.06+4 \sum\limits_{k=16}^{n/2} k(\frac{1}{2})^{k} + 3n(\frac{1}{2})^{n}}\\
   &\stackrel{\mathrm{(xxiv)}}{\leq} \textstyle{ |\alpha_{\widehat{x}}-\alpha_{\widecheck{y}}| - 0.12 + 8 \sum\limits_{k=16}^{n/2} k(\frac{1}{2})^{k}  +5n(\frac{1}{2})^{n} }\\
   & \leq \textstyle{ |\alpha_{\widehat{x}}-\alpha_{\widecheck{y}}| - 0.12 + 8\sum\limits_{k=16}^{n} k(\frac{1}{2})^{k}  }\\
   & \stackrel{\mathrm{(xxv)}}{\leq} \textstyle{|\alpha_{\widehat{x}}-\alpha_{\widecheck{y}}| }
\end{align*}
where to obtain the inequality labeled (xxiii), we used \eqref{eq:P3}, \eqref{eq:P4} and the induction hypothesis that (C3) holds for $\mathrm{RM}(m,1)$. The inequality labeled (xxiv) follows directly from \eqref{eq:C4_1}, whereas we get the inequality labeled (xxv) based on the fact that $\sum\limits_{k=16}^{n} k \, (\frac{1}{2})^{k} \leq 0.001 $. \\

For (C3), using lemma \ref{lemma_update}(c),(f), we have
\begin{align*}
    & |\beta_{\widehat{x}}-\gamma_{\widecheck{y}}| \\
    &= \textstyle{| \beta_{x}-\beta_{y} +(\frac{1}{2})^{n-1}(\beta_{x}-\gamma_{y})+(\frac{1}{2})^{n}(\alpha_{x}-\alpha_{y})| }\\
    & \stackrel{\mathrm{(xxvi)}}{\leq} \textstyle{ |\alpha_{x}-\alpha_{y}|+2n(\frac{1}{2})^{n}} \\
   & \stackrel{\mathrm{(xxvii)}}{\leq} \textstyle{ |\alpha_{\widehat{x}}-\alpha_{\widecheck{y}}| - 0.06 + 4 \sum\limits_{k=16}^{n/2} k(\frac{1}{2})^{k}  + 2n(\frac{1}{2})^{n}   +2n(\frac{1}{2})^{n} }\\
   & \leq \textstyle{ |\alpha_{\widehat{x}}-\alpha_{\widecheck{y}}| - 0.06 +4\sum\limits_{k=16}^{n} k(\frac{1}{2})^{k} }
\end{align*}
where to obtain the inequality labeled (xxvi), we used \eqref{eq:P1}, \eqref{eq:P6} and the induction hypothesis that (C2) holds for $\mathrm{RM}(m,1)$.  Meanwhile, the inequality labeled (xxvii) follows directly from \eqref{eq:C4_1}.

\end{document}